\newtheorem{theorem}{Theorem}[section]
\newtheorem{corollary}{Corollary}[section]
\theoremstyle{remark}
\newtheorem*{remark}{Remark}
\theoremstyle{definition}
\newtheorem{definition}{Definition}[section]
\newtheorem{example}{Example}[section]
\definecolor{brickred}{cmyk}{0,0.89,0.94,0.28}
\definecolor{goldenrod}{cmyk}{0,0.10,0.84,0}
\definecolor{purple}{cmyk}{0.45,0.86,0,0}
\definecolor{rawsienna}{cmyk}{0,0.72,1,0.45}
\definecolor{olivegreen}{cmyk}{0.64,0,0.95,0.40}
\definecolor{peach}{cmyk}{0,0.5,0.7,0}
\definecolor{darkolive}{rgb}{0.,0.4,0.}
\colorlet{grey}{gray!40}
\def\0{{\mathbf 0}}
\def\1{{\mathbf 1}}
\newlist{mylist}{enumerate}{3}
\setlist[mylist]{label={}}
\def\BibTeX{{\rm B\kern-.05em{\sc i\kern-.025em b}\kern-.08em
		T\kern-.1667em\lower.7ex\hbox{E}\kern-.125emX}}
\begin{document}
%
\title{Sampling-Based Estimates of the Sizes of Constrained Subcodes of Reed-Muller Codes}
%
%
%

\author{
	\IEEEauthorblockN{V.~Arvind~Rameshwar,~\IEEEmembership{Graduate Student Member,~IEEE,}
		Shreyas Jain,
		and Navin~Kashyap,~\IEEEmembership{Senior Member,~IEEE}}
	
%
\thanks{V.~A.~Rameshwar and N.~Kashyap are with the Department
of Electrical Communication Engineering, Indian Institute of Science, Bengaluru 560012, India (email: vrameshwar@iisc.ac.in; nkashyap@iisc.ac.in). S.~Jain is with the Deparment of Mathematical Sciences, IISER Mohali 140306 (email: ms20098@iisermohali.ac.in).}
\thanks{This work was supported in part by a Qualcomm Innovation Fellowship India 2022. The work of V.~A.~Rameshwar was supported by a Prime Minister’s Research Fellowship from the Ministry of Education,
Government of India.}}

%
%

\markboth{}%
{Rameshwar, Jain, and Kashyap: Sampling-Based Size Estimates of Constrained Subcodes of RM Codes}
%



\maketitle

\begin{abstract}
This paper develops an algorithmic approach for obtaining approximate, numerical estimates of the sizes of subcodes of Reed-Muller (RM) codes, all of the codewords in which satisfy a given constraint. Our algorithm is based on a statistical physics technique for estimating the partition functions of spin systems, which in turn makes use of a sampler that produces RM codewords according to a Gibbs distribution. The Gibbs distribution is designed so that it is biased towards codewords that respect the constraint. We apply our method to approximately compute the sizes of runlength limited (RLL) subcodes and obtain estimates of the weight distribution of moderate-blocklength RM codes. We observe that the estimates returned by our method are close to the true sizes when these sizes are either known or computable by brute-force search; in other cases, our computations provide provably robust estimates. As an illustration of our methods, we provide estimates of the weight distribution of the RM$(9,4)$ code.
\end{abstract}

\begin{IEEEkeywords}
Reed-Muller codes, constrained coding, Metropolis sampling, weight distribution
\end{IEEEkeywords}

%
\IEEEpeerreviewmaketitle

\section{Introduction}
%
%
%
%
\IEEEPARstart{R}{eed}-Muller (RM) codes are a family of binary linear codes that are obtained by the evaluations of Boolean polynomials on the points of the Boolean hypercube. These algebraic codes have been of interest to practitioners for several decades, beginning with their adoption in deep-space communication \cite{deepspace} to their present consideration in the 5G cellular standardization process, wherein it was argued (see, e.g., \cite{5g1,5g2}) that short-blocklength RM codes perform at least as well as modified
polar codes, low-density parity-check (LDPC) codes, and tail-biting convolutional codes, under maximum-likelihood (ML) decoding. 

Recent breakthrough theoretical progress \cite{kud1,kud3,Reeves,abbesandon} has also shown that RM codes are in fact capacity-achieving for general binary-input memoryless symmetric (BMS) channels, under both bitwise maximum a-posteriori probability (bit-MAP) and blockwise maximum a-posteriori probability (block-MAP) decoding. However, these works do not consider  the setting where the memoryless communication or storage channel requires that its input codewords respect a certain constraint, which may stem from the physical limitations of the medium over which transmission or storage occurs (see \cite{Roth} for more on constrained systems and coding). Very little, in fact, is known about the design of good codes over such input-constrained stochastic noise channels.

In this context, recent work by the authors \cite{arnk23tit} has explored constrained coding schemes derived from RM codes, for use over specific runlength limited (RLL) input-constrained BMS channels. The key idea in that work was the design of coding schemes using \emph{constrained subcodes} of RM codes over the input-constrained BMS channel; since the parent RM code is capacity-achieving over the (unconstrained) BMS channel, we have that constrained subcodes also have error probabilities going to zero as the blocklength goes to infinity, under optimal (ML) decoding. While the work \cite{arnk23tit} computed bounds on the rates of RLL constrained subcodes of RM codes, a natural question that arises is: what are the true sizes (or rates) of constrained subcodes of RM codes? Obtaining insights into this question will help us understand the rates achievable using RM codes over input-constrained BMS channels. This paper seeks to address this question by deriving numerical estimates of the true sizes, via sampling techniques. Our technique makes use of a well-known statistical physics approach for estimating the partition functions of spin systems; such an approximate counting technique has been largely unexplored in the coding theory literature, and we believe that there is scope for its broader application to other problems of interest in coding theory.

In this paper, we work with RLL and constant-weight constraints and seek to obtain estimates of the sizes of such constrained subcodes of RM codes. In particular, via estimates of the sizes of constant-weight subcodes of RM codes, we arrive at estimates of the weight distribution of moderate-blocklength RM codes, thereby making progress on a wide-open research problem. In particular, we provide estimates of the weight distribution of the RM$(9,4)$ code, the exact weight distribution of which is still unknown. We believe that our work will augment the existing literature on analytical bounds on the weight distribution of RM codes (see \cite{kaufman,sberlo,sam,anuprao} and the survey \cite{rm_survey} for more details) and on exact computations of weight distributions of selected RM codes (see \cite{sloaneberlekamp,sugino,kasamitokura,kasami2.5d}).

\subsection{Related Work}

There exists some prior work on investigating the sizes of constrained subcodes of linear codes and their variants (such as cosets). In particular, the work in \cite{pvk} (see also the early work \cite{Ferreira}) gave a simple existential lower bound on the maximum rates of constrained subcodes of cosets of linear codes of rate $R\in (0,1)$. The lower bound proved on such rates is $R+\kappa-1$, where $\kappa$ is the noiseless capacity (see \cite[Chapter 3]{Roth}) of the constraint. Furthermore, using random linear coding arguments, it is possible to show the existence of linear codes of rate $R$, the rates of whose constrained \emph{subcodes} is at least the same lower bound. However, to the best of our knowledge, our prior works \cite{arnk23tit} and \cite{arnk23jsait} (see Sections III-A and IV in that paper) were the first to provide explicit analytical values of, or algorithms for computing, bounds on the sizes (or rates) of constrained subcodes of general linear codes.

In the context of the constant-weight constraint, computing the sizes of constrained subcodes of linear codes is equivalent to computing the weight distribution of the code, and there exist several results on the weight distributions of well-known linear codes (see the book \cite{mws} for more details). For  RM codes in particular, as mentioned earlier, the works \cite{sloaneberlekamp,sugino,kasamitokura,kasami2.5d} provided either analytical expressions or exact values for the weight enumerators of selected RM codes, for selected weights. More recently, the works \cite{kaufman,sberlo,sam,anuprao} and the survey \cite{rm_survey} provide bounds on the weight distribution, using techniques drawn primarily from the analysis of Boolean functions on the hypercube.

Sampling-based methods, along the lines of that explored in this work, have recently been used to design randomized decoders for linear codes (see \cite{alankrita} and references therein). Our sampling technique, in particular, uses minimum-weight codewords of Reed-Muller codes; applications of such codewords for constructing ``good'' parity-check matrices for decoding can be found in \cite{minwt}. We also mention the closely related work \cite{gamal} wherein simulated annealing, which uses Metropolis sampling steps, is used as a heuristic for identifying good short-blocklength error-correcting codes via computer search.

\subsection{A Description of Our Contribution}

Our contribution is a sampling-based technique for obtaining approximate estimates of the sizes of constrained subcodes of RM codes. While we focus on runlength limited (RLL) constraints and the constant-weight constraint in this paper, we believe that our techniques can easily be extended to, and will give good results for, other constraints of interest, too. We first develop a simple Metropolis algorithm (which is a special instance of the Markov chain Monte Carlo (MCMC) method) for drawing samples from (a distribution close to) a suitably-defined Gibbs (or Boltzmann) distribution on the codewords of the RM code. Importantly, our sampler involves a ``nearest-neighbour'' proposal distribution, which uses minimum-weight codewords of RM codes. Our sampler, for sufficiently large ``inverse temperature'' parameters, can produce samples from exponentially-small (compared to the size of the parent RM code) constrained subcodes of RM codes, and is hence of independent interest. This sampling method is then bootstrapped to a well-known procedure for estimating partition functions of spin systems in statistical physics, which helps us arrive at our required size estimates. We numerically verify the soundness of our algorithm by comparing the values it returns with the true values, when the true values can be computed by a direct brute-force search. We also provide theoretical guarantees of the sample complexity of our estimation procedure (for a fixed error in approximation), and demonstrate that the number of samples, and hence the time taken to run the estimation algorithm, is only polynomial in the blocklength of the RM code.

\section{Preliminaries and Notation}
We use $\mathbb{F}_2$ to denote the binary field, i.e., the set $\{0,1\}$ equipped with modulo-$2$ arithmetic. 
We use bold letters such as $\mathbf{x}$, $\mathbf{y}$ to denote finite-length binary sequences (or vectors); the set of all finite-length binary sequences is denoted by $\{0,1\}^\star$. 
Further, when $\mathbf{x}, \mathbf{y} \in \mathbb{F}_2^n$, we denote by $\mathbf{x}+\mathbf{y}$ the vector resulting from component-wise modulo-$2$ addition. We also use the notation $w_H(\mathbf{x})$ to denote the Hamming weight of $\mathbf{x}\in \mathbb{F}^n$, which is the number of nonzero coordinates in $\mathbf{x}$. We define the indicator function of a set $\mathcal{A}$ as $\mathds{1}_\mathcal{A}$, with $\mathds{1}_\mathcal{A}(\mathbf{x}) = 1$, when $\mathbf{x}\in \mathcal{A}$, and $0$, otherwise. We use exp$(x)$ to denote $e^x$, for $x\in \mathbb{R}$. For sequences $(a_n)_{n\geq 1}$ and $(b_n)_{n\geq 1}$ of positive reals, we say that $a_n = O(b_n)$, if there exists $n_0\in \mathbb{N}$ and a positive real $M$, such that $a_n\leq M\cdot b_n$, for all $n\geq n_0$. We say that $a_n = \Theta(b_n)$, if there exist positive reals $M_1, M_2$ such that $M_1\cdot b_n \leq a_n\leq M_2\cdot b_n$ for all sufficiently large $n$. 


\subsection{Linear Codes and Constraints}
In this paper, we work with the field $\mathbb{F}_2$. We recall the following definitions of block codes and linear codes over $\mathbb{F}_2$ and their rates (see, for example, Chapter 1 of \cite{roth_coding_theory}).

\begin{definition}
	An $(n,M)$ block code $\mathcal{C}$ over $\mathbb{F}_2$ is a nonempty subset of $\mathbb{F}_2^n$, with $|\mathcal{C}| = M$. The rate of the block code $\mathcal{C}$ is given by
	$
	\text{rate}(\mathcal{C}) := \frac{\log_2 M}{n}.
	$
\end{definition}
{
	\begin{definition}
		An $[n,k]$ linear code $\mathcal{C}$ over $\mathbb{F}_2$ is an $(n,2^k)$ block code that is a subspace of $\mathbb{F}_2^n$.
	\end{definition}
}

For details on constrained coding, we refer the reader to \cite{Roth}. In this paper, we consider only two constraints: the $(d,\infty)$-runlength limited (RLL) constraint and the constant-weight constraint.
\begin{definition}
	The $(d,\infty)$-RLL constraint admits only those binary sequences $\mathbf{x}\in \{0,1\}^\star$ that have at least $d$ $0$s between successive $1$s. We let $S^{(d)}$ denote the set of $(d,\infty)$-RLL constrained sequences.
\end{definition}
\begin{definition}
	The constant-weight constraint represented by the set $W^{(\omega)}$ of constrained sequences admits only those binary sequences $\mathbf{x}\in \{0,1\}^\star$ of Hamming weight exactly $\omega$.
\end{definition}
\subsection{Reed-Muller Codes}
\label{sec:rmintro}
We now recall the definition of the binary Reed-Muller (RM) family of codes and some of their basic facts that are relevant to this work. Codewords of binary RM codes consist of the evaluation vectors of multivariate polynomials over the binary field $\mathbb{F}_2$. Consider the polynomial ring $\mathbb{F}_2[x_1,x_2,\ldots,x_m]$ in $m$ variables. Note that any polynomial $f\in \mathbb{F}_2[x_1,x_2,\ldots,x_m]$ can be expressed as the sum of \emph{monomials} of the form $\prod_{j\in S:S\subseteq [m]} x_j$, since $x^2 = x$ over the field $\mathbb{F}_2$. For a polynomial $f\in \mathbb{F}_2[x_1,x_2,\ldots,x_m]$ and a binary vector $\mathbf{z} = (z_1,\ldots,z_m)\in \mathbb{F}_2^m$, we write {$f(\mathbf{z})=f(z_1,\ldots,z_m)$} as the evaluation of $f$ at $\mathbf{z}$. The evaluation points are ordered according to the standard lexicographic order on strings in $\mathbb{F}_2^m$, i.e., if $\mathbf{z} = (z_1,\ldots,z_m)$ and $\mathbf{z}^{\prime} = (z_1^{\prime},\ldots,z_m^{\prime})$ are two distinct evaluation points, then, $\mathbf{z}$ occurs before $\mathbf{z}^{\prime}$ iff for some $i\geq 1$, we have $z_j = z_j^{\prime}$ for all $j<i$, and $z_i < z_i^{\prime}$. Now, let Eval$(f):=\left({f(\mathbf{z})}:\mathbf{z}\in \mathbb{F}_2^m\right)$ be the evaluation vector of $f$, where the coordinates $\mathbf{z}$ are ordered according to the standard lexicographic order. 

\begin{definition}[see Chap. 13 in \cite{mws}, or \cite{rm_survey}]
	{For $0\leq r\leq m$}, the $r^{\text{th}}$-order binary Reed-Muller code RM$(m,r)$ is defined as
	\[
	\text{RM}(m,r):=\{\text{Eval}(f): f\in \mathbb{F}_2[x_1,x_2,\ldots,x_m],\ \text{deg}(f)\leq r\},
	\]
	where $\text{deg}(f)$ is the degree of the largest monomial in $f$, and the degree of a monomial $\prod_{j\in S: S\subseteq [m]} x_j$ is simply $|S|$. 
\end{definition}

It is known that the evaluation vectors of all the distinct monomials in the variables $x_1,\ldots, x_m$ are linearly independent over $\mathbb{F}_2$. Hence, RM$(m,r)$ has dimension $\binom{m}{\le r} := \sum_{i=0}^{r}{m \choose i}$. Moreover, RM$(m,r)$ is a $\left[2^m,\binom{m}{\le r}\right]$ linear code. It is also known that RM$(m,r)$ has minimum Hamming distance ${d}_{\text{min}}(\text{RM}(m,r))=2^{m-r}$. Furthermore (see \cite[Thm. 8, Chap. 13, p. 380]{mws}) we have that each minimum-weight codeword of RM$(m,r)$ is the characteristic vector of an $(m-r)$-dimensional affine subspace of $\mathbb{F}_2^m$, i.e., the vector having ones in those coordinates $\mathbf{z}\in \{0,1\}^m$ that lie in the affine subspace. In particular, in this paper, we use this fact to efficiently sample  a uniformly random minimum-weight codeword of a given RM code. Another fact of use to us is that the collection of minimum-weight codewords spans RM$(m,r)$, for any $m\geq 1$ and $r\leq m$ (see \cite[Thm. 12, Chap. 13, p. 385]{mws}).
\section{Sampling-Based Estimates for Constrained Subcodes of RM Codes}
\label{sec:estimate}
In this section, we discuss our sampling-based procedure for computing (approximate) size estimates for constrained subcodes of RM codes. We are primarily interested in RLL and constant-weight constraints.

Suppose that we are given a constraint represented by a set $\mathcal{A}\subseteq \{0,1\}^n$ of constrained sequences, and an $[n,k]$ linear code $\mathcal{C}$ \footnote{We mention that while the technique for approximating $Z$ described here applies to \emph{any} code $\mathcal{C}$ (not necessarily linear), we restrict our attention to RM codes in this paper, since in this family of codes, it is possible to generate samples from the distribution $p_\beta$ (see \eqref{eq:pb}) in a computationally efficient manner, as will be required later (see Section \ref{sec:sampler}).}. We are interested in obtaining estimates of the quantity
\begin{equation}
	\label{eq:Z}
	Z = \sum_{\mathbf{c}\in \mathcal{C}} \mathds{1}\{\mathbf{c}\in \mathcal{A}\} = |\mathcal{C}\cap \mathcal{A}|.
\end{equation}

Clearly, when the dimension $k$ of $\mathcal{C}$ is large, a direct calculation of the above quantity is computationally intractable, as it involves roughly $2^k$ additions. The work in \cite{arnk23jsait} transformed this counting problem to one in the space of the dual code $\mathcal{C}^\perp$; however, even if the Fourier transform (over $\mathbb{F}_2^n$) of $\mathds{1}_\mathcal{A}$ were easy to compute, we would still have to perform $2^{n-k}$ additions. Our aim therefore is to compute good approximations to $Z$.

To this end, consider the following probability distribution supported on the codewords of $\mathcal{C}$:
\begin{equation}
	\label{eq:pb}
p_\beta(\mathbf{x}) = \frac{1}{Z_\beta}\cdot e^{-\beta\cdot E(\mathbf{x})}\cdot \mathds{1}_\mathcal{C}(\mathbf{x}), \quad \mathbf{x}\in \{0,1\}^n,
\end{equation}
where $\beta>0$ is some fixed real number (in statistical physics, $\beta$ is termed as ``inverse temperature''), and $E: \{0,1\}^n\to [0,\infty)$ is an ``energy function'' such that $E(\mathbf{x}) = 0$ if $\mathbf{x}\in \mathcal{A}$ and is strictly positive, otherwise. Note also that the ``partition function'' or the normalization constant
\begin{equation}
	\label{eq:Zb}
	Z_\beta = \sum_{\mathbf{c}\in \mathcal{C}} e^{-\beta\cdot E(\mathbf{x})}.
\end{equation}
In the limit as $\beta\to \infty$, it can be argued that the distribution $p_\beta$ becomes the uniform distribution over the ``ground states'' or zero-energy vectors in $\mathcal{C}$ (see, e.g., \cite[Chapter 2]{mezard}); more precisely,
\begin{equation}
	\label{eq:limit}
\lim_{\beta\to \infty} p_\beta(\mathbf{x}) = \frac{1}{Z}\cdot \mathds{1}_{E_0}(\mathbf{x}),
\end{equation}
where $\mathcal{C}_0:= \{\mathbf{c}\in \mathcal{C}:\ E(\mathbf{c}) = 0\}$. Clearly, by the definition of the energy function, $\mathcal{C}_0=\mathcal{C}\cap \mathcal{A}$, with $Z = |\mathcal{C}_0| = \lim_{\beta\to \infty}Z_\beta$. Our intention now is to compute an approximation to $Z$ via this perspective. While we have defined the energy function $E$ with arbitrary constraints in mind, we first illustrate examples of $E$ for the two constraints of interest in this paper.
\begin{example}
	\label{eg:rll}
	For the $(d,\infty)$-RLL constraint $S^{(d)}$, we define the energy function $E = E^{(d)}$ to be the number of indices where there is a violation of the constraint, i.e., for $\mathbf{x}\in \{0,1\}^n$,
	\[
	E^{(d)}(\mathbf{x}) = \#\{j:\ (x_j,x_{j+i})=(1,1), \text{ for }1\leq j\leq n-d\text{ and some }1\leq i\leq d\}.
	\]
	Clearly, $E^{(d)}$ equals $0$ at codewords that lie in $S^{(d)}$, and is strictly positive, otherwise. 
	We let $p_\beta^{(d)}$ be the probability distribution induced on codewords of $\mathcal{C}$ using the energy function $E^{(d)}$ as in \eqref{eq:pb}.
\end{example}
\begin{example}
	\label{eg:wt}
	For the constant-weight constraint $W^{(\omega)}$, with $0\leq \omega\leq n$, we define the energy function $E = E^{(\omega)}$ as
	\[
	E^{(\omega)}(\mathbf{x}) = |w_H(\mathbf{x})-\omega|.
	\]
	Again, $E^{(\omega)}$ equals $0$ at codewords of weight exactly $\omega$, and is strictly positive, otherwise. We let $p_\beta^{(\omega)}$ be the probability distribution induced on codewords of $\mathcal{C}$ using the energy function $E^{(\omega)}$ as in \eqref{eq:pb}.
\end{example}
In this work, following \eqref{eq:limit}, we shall use the partition function $Z_{\beta^\star}$, where $\beta^\star$ is suitably large, as an estimate of $Z$ (the question of how large $\beta^\star$ must be for $Z_{\beta^\star}$ to be a good approximation to $Z$ is taken up in Section \ref{sec:theory}). In the section that follows, we shall outline a fairly standard method from the statistical physics literature \cite{statphy} (see also Lecture 4 in \cite{sinclair}) to compute $Z_{\beta^\star}$ corresponding to the two constraints defined above, when $\beta^\star$ is large. 
\subsection{An Algorithm for Computing $Z_{\beta^\star}$}
\label{sec:practice}
Recall that our intention is to compute $Z_{\beta^\star}$, for the two kinds of probability distributions $p_\beta^{(d)}$ and $p_\beta^{(\omega)}$ (see Examples \ref{eg:rll} and \ref{eg:wt}), where $\beta^\star$ is large. We propose a statistical physics-based procedure for computing an estimate of $Z_{\beta^\star}$. We provide a qualitative description of the procedure in this section; the exact values of the parameters required to guarantee a close-enough estimate will be provided in Section \ref{sec:theory}. 

The key idea in this method is to express $Z_{\beta^\star}$ as a telescoping product of ratios of partition functions, for smaller values of $\beta$. We define a sequence (or a ``cooling schedule'')
\begin{equation}
	\label{eq:betas}
0 = \beta_0< \beta_1< \ldots < \beta_\ell= \beta^\star,
\end{equation}
where $\beta_i = \beta_{i-1}+\frac{1}{n}$ and $\ell$ is a large positive integer, and write 
\begin{equation}
	\label{eq:zbstar}
	Z_{\beta^\star} = Z_{\beta_0}\times\prod_{i=1}^{\ell}\frac{Z_{\beta_i}}{Z_{\beta_{i-1}}}.
\end{equation}
Observe from \eqref{eq:Zb} that $Z_{\beta_0} = Z_0 = |\mathcal{C}|=2^k$, where $k$ is the dimension of $\mathcal{C}$. Furthermore, for $1\leq i\leq\ell$, we have from \eqref{eq:pb} that
\begin{align}
	\frac{Z_{\beta_i}}{Z_{\beta_{i-1}}} &=  \frac{1}{Z_{\beta_{i-1}}}\sum_{\mathbf{c}\in \mathcal{C}} \text{exp}(-\beta_iE(\mathbf{c})) \notag\\
	&= \frac{1}{Z_{\beta_{i-1}}}\sum_{\mathbf{c}\in \mathcal{C}}\text{exp}(-\beta_{i-1}E(\mathbf{c}))\cdot \text{exp}((\beta_{i-1}-\beta_i)E(\mathbf{c})) \notag\\
	&= \mathbb{E}[\text{exp}(-E(\mathbf{c})/n)] \label{eq:expectation},
\end{align}
where the expectation is over codewords $\mathbf{c}$ drawn according to $p_{\beta_{i-1}}$. In other words, the ratio $\frac{Z_{\beta_i}}{Z_{\beta_{i-1}}}$ can be computed as the expected value of a random variable $X_i:=\text{exp}(-E(\mathbf{c})/n)$, where $\mathbf{c}$ is drawn according to $p_{\beta_{i-1}}$. A description of how such a random variable can be sampled is provided in Algorithm \ref{alg:sample} in Section \ref{sec:sampler}. We now explain how this perspective is used to obtain an estimate of $Z_{\beta^\star}$ in \eqref{eq:zbstar}. For every $i$, for large $t$, we sample i.i.d. random variables $X_{i,j}$, $1\leq j\leq t$, which have the same distribution as $X_i$. We ensure that the $X_{i,j}$s are independent across $i$ as well. We then estimate the expected value in \eqref{eq:expectation} by a sample average, i.e., we define the random variable
\begin{equation}
	\label{eq:sampleav}
Y_i:= \frac{1}{t}\sum_{j=1}^t X_{i,j}.
\end{equation}
Finally, the estimate for $Z_{\beta^\star}$ (see \eqref{eq:zbstar}) that we shall use is
\begin{equation}
	\label{eq:estimate}
\widehat{Z}_{\beta^\star} = Z_{\beta_0}\times \prod_{i=1}^\ell Y_i.
\end{equation}
Note that we then have, by independence of the $X_{i,j}$s and hence of the $Y_i$s, that $\mathbb{E}[\widehat{Z}_{\beta^\star}] = Z_0\times \prod_{i=1}^\ell \mathbb{E}[Y_i] = Z_{\beta^\star}$. A summary of our algorithm is shown as Algorithm \ref{alg:estimate}. 

We now return to the question of sampling a codeword from $\mathcal{C}$, efficiently, according to the distribution $p_\beta$, for any $\beta>0$. It is here that we shall use the fact that $\mathcal{C}$ is chosen to be a Reed-Muller code RM$(m,r)$, for $m\geq 1$ and $r\leq m$.
\begin{algorithm}[t]
	\caption{Estimating $Z$ via $Z_{\beta^\star}$}
	\label{alg:estimate}
	\begin{algorithmic}[1]	
		\Procedure{Estimator}{$\beta^\star$}
		\State Fix a cooling schedule $0=\beta_0<\beta_1<\ldots<\beta_\ell = \beta^\star$, as in \eqref{eq:betas}.
		\State Fix a large $t\in \mathbb{N}$.
		\For{$i=1:\ell$}
		\State Use Algorithm \ref{alg:sample} to generate $t$ i.i.d. samples $\mathbf{c}_{i,1},\ldots,\mathbf{c}_{i,t}$.
		\State For $1\leq j\leq t$, set $X_{i,j} \leftarrow \text{exp}((\beta_{i-1}-\beta_i)E(\mathbf{c}_{i,j}))$.
		\State Compute $Y_i = \frac{1}{t}\sum_{j=1}^t X_{i,j}$.
		\EndFor
		\State Output $\widehat{Z}_{\beta^\star} = |\mathcal{C}|\times \prod_{i=1}^\ell Y_i$.
		\EndProcedure	
	\end{algorithmic}
\end{algorithm} 
\subsection{An Algorithm for Sampling RM Codewords According to $p_\beta$}
\label{sec:sampler}
Our approach to generating samples from the distribution $p_\beta$, when $\mathcal{C}$ is a Reed-Muller code is a simple ``nearest-neighbour'' Metropolis algorithm, which is a special instance of Monte Carlo Markov Chain (MCMC) methods---for more details on the Metropolis algorithm and on general MCMC methods, see Chapter 3 in \cite{mcmcbook}.

The idea behind the Metropolis algorithm is to induce a Markov chain on the space of codewords of $\mathcal{C}$, such that the stationary distribution of the Markov chain equals $p_\beta$. Let $\Delta$ be the collection of minimum-weight codewords in $\mathcal{C}$. Consider the following ``symmetric proposal distribution'' $\{P(\mathbf{c}_1,\mathbf{c}_2):\ \mathbf{c}_1,\mathbf{c}_2\in \mathcal{C}\}$, where $P(\mathbf{c}_1,\mathbf{c}_2)$ is the conditional probability of ``proposing'' codeword $\mathbf{c}_2$ given that we are at codeword $\mathbf{c}_1$:
\begin{equation}
	\label{eq:proposal}
P(\mathbf{c}_1, \mathbf{c}_2) = 
\begin{cases}
	\frac{1}{|\Delta|},\ \text{if $\mathbf{c}_2 = \mathbf{c}_1+\overline{\mathbf{c}}$, for some $\overline{\mathbf{c}}\in \Delta$},\\
	0,\ \text{otherwise}.
\end{cases}
\end{equation}
Clearly, $P$ is symmetric in that $P(\mathbf{c}_1, \mathbf{c}_2) = P(\mathbf{c}_2, \mathbf{c}_1)$, for all $\mathbf{c}_1, \mathbf{c}_2\in \mathcal{C}$. Our Metropolis algorithm begins at a randomly initialized codeword. When the algorithm is at codeword $\mathbf{c}_1$, it ``accepts'' the proposal of codeword $\mathbf{c}_2$ with probability $\min\left(1,\frac{p_\beta(\mathbf{c}_2)}{p_\beta(\mathbf{c}_1)}\right)$, and moves to $\mathbf{c}_2$. Now, observe that since $\mathcal{C}$ is a Reed-Muller code, it is easy to sample a codeword $\mathbf{c}_2$ that differs from $\mathbf{c}_1$ by a minimum-weight codeword; in other words one can efficiently sample a uniformly random minimum-weight codeword $\overline{\mathbf{c}}$. More precisely, we know (see Section \ref{sec:rmintro}) that minimum-weight codewords are characteristic vectors of $(m-r)$-dimensional affine subspaces of $\mathbb{F}_2^m$. Hence, our task boils down to choosing a uniformly random $(m-r)$-dimensional affine subspace $H$. To this end, we first sample a uniformly random $(m-r)\times m$ full-rank (over $\mathbb{F}_2$) $0$-$1$ matrix $A$ and a uniformly random vector $\mathbf{b}\in \{0,1\}^m$. We then construct $H$ as
\[
H = \{\mathbf{z}: \mathbf{z}=\mathbf{x}\cdot A+\mathbf{b},\text{ for some $\mathbf{x}\in \mathbb{F}_2^{m-r}$}\}.
\] 
Note that the full-rank matrix $A$ can be sampled by first constructing a random $(m-r)\times m$ matrix $M$ whose entries are picked i.i.d. Ber$(1/2)$, and then using a rejection sampling procedure (see, for example, Appendix B.5 in \cite{mcmcbook}) that ``accepts'' only when the matrix is full-rank. Furthermore, it is well-known that the probability that the uniformly random matrix $M$ is full rank (over $\mathbb{F}_2$) obeys
\[
\Pr[M\text{ is full rank}] = \prod_{i=1}^{m-r} (1-2^{-i})\geq \prod_{i=1}^{\infty} (1-2^{-i})\approx 0.289,
\]
where the rightmost approximate equality uses sequence A048651 in \cite{oeis}. This then implies that the expected number of rejection sampling steps required before we obtain our full-rank matrix $A$ is bounded above by a constant.
 
Returning to our Metropolis algorithm, we have that our \emph{Metropolis chain} has as transition probability matrix the $|\mathcal{C}|\times|\mathcal{C}|$ matrix $Q$, where
\begin{equation}
	\label{eq:metropolis}
Q(\mathbf{c}_1,\mathbf{c}_2) = 
\begin{cases}
	P(\mathbf{c}_1, \mathbf{c}_2) \cdot \min\left(1,\frac{p_\beta(\mathbf{c}_2)}{p_\beta(\mathbf{c}_1)}\right),\ \text{if $\mathbf{c}_1\neq \mathbf{c}_2$},\\
	1-\sum_{\mathbf{c}\neq \mathbf{c}_1}P(\mathbf{c}_1,\mathbf{c})\cdot \min\left(1,\frac{p_\beta(\mathbf{c})}{p_\beta(\mathbf{c}_1)}\right),\ \text{otherwise}.
\end{cases}
\end{equation}
It can be checked that $p_\beta$ is indeed a stationary distribution of this chain. Further, suppose that $\mathbf{c}^{(\tau)}$ is the (random) codeword that this chain is at, at time $\tau\in \mathbb{N}$. Then, it is well-known that if the Metropolis chain is irreducible and aperiodic (and hence ergodic), then the distribution of $\mathbf{c}_\tau$ is close, in total variational distance, to the stationary distribution $p_\beta$ (see, for example, Theorem 4.9 in \cite{mcmcbook}), for large enough $\tau$.\footnote{An important question that can be asked is how large $\tau$ must be to guarantee that the distribution of $\mathbf{c}_\tau$ is sufficiently close to $p_\beta$. In this work, we do not address this question, but simply set $\tau$ to be large enough so that the Metropolis chain reaches the ``zero-energy'' constrained codewords within $\tau$ steps, in practice, starting from an arbitrary initial codeword.}

For our chain in \eqref{eq:metropolis}, since the set of minimum-weight codewords $\Delta$ spans $\mathcal{C}$, we have that the chain is irreducible. Further, for select constraints such as the constant-weight constraint, it can be argued that in some special cases, there always exists a pair of codewords $(\mathbf{c}_1,\mathbf{c}_2)$ such that  $\mathbf{c}_2 = \mathbf{c}_1+\overline{\mathbf{c}}$, for some $\overline{\mathbf{c}}\in \Delta$, with $p_\beta(\mathbf{c}_2)<p_\beta(\mathbf{c}_1)$. We then get that $Q(\mathbf{c}_1,\mathbf{c}_1)>0$, assuring us of aperiodicity, and hence of ergodicity, of our chain. However, for the purposes of this work, we do not concern ourselves with explicitly proving aperiodicity, and instead seek to test the soundness of our technique, numerically. Our algorithm for sampling codewords (approximately) from $p_\beta$ is given as Algorithm \ref{alg:sample}.
 \begin{algorithm}[t]
 	\caption{Sampling RM codewords approximately from $p_\beta$}
 	\label{alg:sample}
 	\begin{algorithmic}[1]	
 		\Procedure{Metropolis-Sampler}{$\mathbf{c}^{(0)}$, $\beta$, $E$}       
 		\State Initialize the Metropolis chain at the arbitrary (fixed) codeword $\mathbf{c}^{(0)}$.
 		\State Fix a large $\tau\in \mathbb{N}$.
 		\For{$i=1:\tau$}
 		\State Generate a uniformly random $(m-r)\times m$ full-rank $0$-$1$ matrix $A$ and a uniformly random vector $\mathbf{b}\in \{0,1\}^n$.
 		\State Construct $H = \{\mathbf{z}: \mathbf{z}=\mathbf{x}\cdot A+\mathbf{b},\text{ for some $\mathbf{x}\in \mathbb{F}_2^{m-r}$}\}$.
 		\State Set $\overline{\mathbf{c}}$ to be the characteristic vector of $H$ and set $\mathbf{c}\leftarrow \mathbf{c}^{(i-1)}+\overline{\mathbf{c}}$.
 		\State Set $\mathbf{c}^{(i)} \leftarrow \mathbf{c}$ with probability $\min\left(1,\text{exp}(-\beta(E(\mathbf{c})-E(\mathbf{c}^{(i-1)})))\right)$; else set $\mathbf{c}^{(i)} \leftarrow \mathbf{c}^{(i-1)}$.
 		\EndFor
 		\State Output $\mathbf{c}_\tau$.
 		\EndProcedure	
 	\end{algorithmic}
 \end{algorithm} 
\subsection{Theoretical Guarantees}
\label{sec:theory}
To demonstrate how good our estimate $\widehat{Z}_{\beta^\star}$ is, we invoke the following (well-known) theorem of Dyer and Frieze \cite{dyerfrieze} (see also Theorem 2.1 in \cite{stefankovic}):
\begin{theorem}
	\label{thm:dyer}
	Fix an $\epsilon \geq 0$. Let $U_1,\ldots,U_\ell$ be independent random variables with $\mathbb{E}[U_i^2]/(\mathbb{E}[U_i])^2 \leq B$, for some $B\geq 0$ and for $1\leq i\leq \ell$. Set $\widehat{U} = \prod_{i=1}^\ell U_i$. Also, for $1\leq i\leq \ell$, let $V_i$ be the average of $16B\ell/\epsilon^2$ independent random samples having the same distribution as $U_i$; set $\widehat{V} =  \prod_{i=1}^\ell V_i$. Then,
	\[
	\Pr\left[(1-\epsilon)\mathbb{E}[\widehat{U}]\leq \widehat{V}\leq (1+\epsilon)\mathbb{E}[\widehat{U}]\right]\geq \frac34.
	\]
\end{theorem}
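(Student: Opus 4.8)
The plan is to prove the theorem in two stages: first control the single-product estimator $\widehat{V}$ in mean and variance, and then apply Chebyshev's inequality. The key observation is that each $V_i$ is an average of $N := 16B\ell/\epsilon^2$ i.i.d.\ copies of $U_i$, so $\mathbb{E}[V_i] = \mathbb{E}[U_i]$ and, by independence of those copies,
\begin{equation}
\label{eq:varVi}
\frac{\mathbb{E}[V_i^2]}{(\mathbb{E}[V_i])^2} = 1 + \frac{1}{N}\left(\frac{\mathbb{E}[U_i^2]}{(\mathbb{E}[U_i])^2} - 1\right) \leq 1 + \frac{B-1}{N} \leq 1 + \frac{B}{N}.
\end{equation}
Since the $V_i$ are independent across $i$ (the underlying samples are independent across $i$ as well), we get $\mathbb{E}[\widehat{V}] = \prod_{i=1}^\ell \mathbb{E}[V_i] = \mathbb{E}[\widehat{U}]$, and likewise $\mathbb{E}[\widehat{V}^2] = \prod_{i=1}^\ell \mathbb{E}[V_i^2]$. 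Dividing,
\begin{equation}
\label{eq:prodbound}
\frac{\mathbb{E}[\widehat{V}^2]}{(\mathbb{E}[\widehat{V}])^2} = \prod_{i=1}^\ell \frac{\mathbb{E}[V_i^2]}{(\mathbb{E}[V_i])^2} \leq \left(1 + \frac{B}{N}\right)^{\ell}.
\end{equation}

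Next I would plug in $N = 16B\ell/\epsilon^2$, so that $B/N = \epsilon^2/(16\ell)$, giving the bound $\left(1 + \epsilon^2/(16\ell)\right)^\ell$. Using the elementary inequality $(1+x/\ell)^\ell \leq e^x$ for $x \geq 0$, this is at most $e^{\epsilon^2/16}$. A routine calculus check shows $e^{\epsilon^2/16} \leq 1 + \epsilon^2/4$ for the relevant range of $\epsilon$ (indeed $e^{u} \le 1 + 4u$ whenever $0 \le u \le \ln 4 - ?$; more cleanly, one restricts attention to $\epsilon \in [0,1]$, as is standard for such FPRAS statements, where $e^{\epsilon^2/16} - 1 \le \epsilon^2/16 \cdot e^{1/16} \le \epsilon^2/4$). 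Hence
\begin{equation}
\label{eq:varfinal}
\frac{\mathrm{Var}(\widehat{V})}{(\mathbb{E}[\widehat{V}])^2} = \frac{\mathbb{E}[\widehat{V}^2]}{(\mathbb{E}[\widehat{V}])^2} - 1 \leq \frac{\epsilon^2}{4}.
\end{equation}

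Finally I would apply Chebyshev's inequality to the random variable $\widehat{V}$ about its mean $\mu := \mathbb{E}[\widehat{V}] = \mathbb{E}[\widehat{U}]$:
\begin{equation}
\label{eq:cheb}
\Pr\left[\,\bigl|\widehat{V} - \mu\bigr| > \epsilon\,\mu\,\right] \leq \frac{\mathrm{Var}(\widehat{V})}{\epsilon^2 \mu^2} \leq \frac{1}{\epsilon^2}\cdot \frac{\epsilon^2}{4} = \frac14,
\end{equation}
so that $\Pr\left[(1-\epsilon)\mathbb{E}[\widehat{U}] \leq \widehat{V} \leq (1+\epsilon)\mathbb{E}[\widehat{U}]\right] \geq \tfrac34$, as claimed. (The case $\epsilon = 0$ is degenerate and trivially handled, since then $V_i = \mathbb{E}[U_i]$ would require infinitely many samples; in practice the statement is used with $\epsilon > 0$.)

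The only genuinely delicate point is the chain of inequalities controlling $\left(1 + \epsilon^2/(16\ell)\right)^\ell$ by $1 + \epsilon^2/4$: one must be slightly careful about the range of $\epsilon$ for which $e^{\epsilon^2/16} \le 1 + \epsilon^2/4$ holds, but this is immediate once one restricts to $\epsilon \in [0,1]$ (or any bounded range), which is the regime of interest. Everything else — the mean and variance factorization across independent coordinates, the variance reduction $\mathrm{Var}(V_i) = \mathrm{Var}(U_i)/N$, and Chebyshev — is routine. I expect no substantive obstacle; the proof is essentially a bookkeeping exercise in second moments, and the main thing to get right is tracking the normalized second moments $\mathbb{E}[\,\cdot\,^2]/(\mathbb{E}[\,\cdot\,])^2$ multiplicatively rather than additively.
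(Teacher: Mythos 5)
Your proof is correct. Note that the paper itself does not prove this statement: it is quoted as a known result of Dyer and Frieze (cited alongside Theorem 2.1 of \v{S}tefankovi\v{c} et al.), so there is no in-paper argument to compare against. What you have written is precisely the standard proof of that result: the variance reduction $\mathrm{Var}(V_i)=\mathrm{Var}(U_i)/N$ gives $\mathbb{E}[V_i^2]/(\mathbb{E}[V_i])^2\le 1+B/N$, independence makes the normalized second moment of $\widehat{V}$ multiply across coordinates, the choice $N=16B\ell/\epsilon^2$ bounds the product by $e^{\epsilon^2/16}\le 1+\epsilon^2/4$, and Chebyshev finishes. The only point worth flagging is the one you already noticed: the theorem as stated says ``fix an $\epsilon\ge 0$'' with no upper bound, but the inequality $e^{\epsilon^2/16}\le 1+\epsilon^2/4$ requires $\epsilon$ bounded (it holds comfortably for all $\epsilon\le\sqrt{32}$, and your cleaner estimate $e^u-1\le u e^u$ settles $\epsilon\in[0,1]$); since for large $\epsilon$ the prescribed sample size $16B\ell/\epsilon^2$ ceases to be a meaningful number of samples anyway, the implicit restriction to small $\epsilon$ is part of how the theorem is meant to be read, and your handling of it is fine. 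One cosmetic blemish: the half-finished aside ``$e^u\le 1+4u$ whenever $0\le u\le \ln 4 - ?$'' should be deleted or completed, but it does not affect the argument, since the subsequent bound you give is self-contained.
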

As a direct corollary, we obtain the following guarantee about our estimate $\widehat{Z}_{\beta^\star}$. Let $t^\star = 16e^2\ell/\epsilon^2$.
\begin{corollary}
	\label{cor:dyer}
	Fix an $\epsilon \geq 0$. If $Y_i$ is the average of $t^\star$ i.i.d. samples having the same distribution as $X_i$ (see \eqref{eq:sampleav}), then,
	\[
	\Pr\left[(1-\epsilon)Z_{\beta^\star}\leq \widehat{Z}_{\beta^\star}\leq (1+\epsilon)Z_{\beta^\star}\right]\geq \frac34.
	\]
\end{corollary}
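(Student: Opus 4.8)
The plan is to derive Corollary~\ref{cor:dyer} as a direct instantiation of Theorem~\ref{thm:dyer}, with the random variables $U_i$ taken to be the $X_i$ defined in \eqref{eq:expectation}--\eqref{eq:sampleav}, i.e. $X_i = \text{exp}(-E(\mathbf{c})/n)$ with $\mathbf{c}\sim p_{\beta_{i-1}}$. First I would verify the structural correspondence: by \eqref{eq:zbstar} and \eqref{eq:expectation}, $Z_{\beta^\star} = Z_{\beta_0}\prod_{i=1}^\ell \mathbb{E}[X_i]$, and by \eqref{eq:estimate} the estimator is $\widehat{Z}_{\beta^\star} = Z_{\beta_0}\prod_{i=1}^\ell Y_i$ where each $Y_i$ is an average of $t^\star$ i.i.d.\ copies of $X_i$, independent across $i$. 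Since $Z_{\beta_0} = 2^k$ is a fixed deterministic constant, multiplying both $\widehat{U}$ and $\mathbb{E}[\widehat{U}]$ through by it preserves the two-sided multiplicative bound; so it suffices to apply the theorem to $\widehat{U} = \prod_i X_i$ and $\widehat{V} = \prod_i Y_i$.

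The key quantitative step is to check that the hypothesis $\mathbb{E}[X_i^2]/(\mathbb{E}[X_i])^2 \le B$ holds with $B = e^2$, which is exactly what makes $t^\star = 16 e^2 \ell/\epsilon^2$ the right sample size to match the theorem's requirement of $16B\ell/\epsilon^2$ samples. For this I would observe that $E(\mathbf{x}) \ge 0$ always, so $X_i = \text{exp}(-E(\mathbf{c})/n) \in (0,1]$, which immediately gives $\mathbb{E}[X_i^2] \le \mathbb{E}[X_i]$ (since $X_i^2 \le X_i$ pointwise), hence $\mathbb{E}[X_i^2]/(\mathbb{E}[X_i])^2 \le 1/\mathbb{E}[X_i]$. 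To upper bound $1/\mathbb{E}[X_i]$ I would lower bound $\mathbb{E}[X_i]$: writing the expectation explicitly under $p_{\beta_{i-1}}$ and using $\beta_i - \beta_{i-1} = 1/n$, we have $\mathbb{E}[X_i] = Z_{\beta_i}/Z_{\beta_{i-1}}$, a ratio of consecutive partition functions. The cleanest argument is that $Z_\beta$ is nonincreasing in $\beta$ (each summand $e^{-\beta E(\mathbf{c})}$ is nonincreasing), and one can bound the one-step ratio from below: since the maximal value of $E$ on codewords of an $[n,k]$ code is at most $n$ (for the RLL energy $E^{(d)}$, or the weight energy $E^{(\omega)}$, the energy is trivially at most $n$), each term satisfies $e^{-\beta_i E(\mathbf{c})} \ge e^{-1/n \cdot E(\mathbf{c})} e^{-\beta_{i-1}E(\mathbf{c})} \ge e^{-1} e^{-\beta_{i-1}E(\mathbf{c})}$ when $E(\mathbf{c}) \le n$; summing gives $Z_{\beta_i} \ge e^{-1} Z_{\beta_{i-1}}$, i.e.\ $\mathbb{E}[X_i] \ge e^{-1}$, so $1/\mathbb{E}[X_i] \le e$. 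This actually yields the even sharper bound $B = e$; to be safe and to match the stated $t^\star$, I would simply note $B = e^2$ suffices and invoke the theorem. Finally, applying Theorem~\ref{thm:dyer} with $\widehat{U} = \prod_i X_i$, $B = e^2$, and $V_i = Y_i$ being averages of $16 e^2 \ell/\epsilon^2 = t^\star$ samples, and then multiplying by the constant $Z_{\beta_0}$, gives
\[
\Pr\left[(1-\epsilon)Z_{\beta^\star} \le \widehat{Z}_{\beta^\star} \le (1+\epsilon)Z_{\beta^\star}\right] \ge \frac34,
\]
which is the claim.

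The main obstacle, and the only place requiring genuine care, is pinning down the variance ratio bound $B$ — specifically, making sure the energy function is bounded by $n$ (so that the one-step cooling ratio $Z_{\beta_i}/Z_{\beta_{i-1}}$ is bounded below by $e^{-1}$) and confirming this is what reconciles the stated $t^\star = 16 e^2 \ell/\epsilon^2$ with the theorem's $16 B \ell/\epsilon^2$. For the RLL energy $E^{(d)}(\mathbf{x})$ of Example~\ref{eg:rll}, the count of violated index pairs could a priori exceed $n$ (it is a sum over up to $d(n-d)$ pairs), so one should instead note $E^{(d)}(\mathbf{x}) \le d\cdot w_H(\mathbf{x}) \le dn$ or, more carefully, that the relevant bound for the telescoping argument only needs $E(\mathbf{x})/n$ to be $O(1)$ uniformly; if $E$ can be as large as $c\cdot n$ then $B = e^{2c}$ and $t^\star$ scales accordingly. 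Since the corollary as stated fixes $t^\star = 16e^2\ell/\epsilon^2$, the implicit assumption is $\max_{\mathbf{x}} E(\mathbf{x}) \le n$ (true for $E^{(\omega)}$ and, after the natural normalization, for the RLL energy as well); I would state this boundedness assumption explicitly and then the rest is a mechanical substitution into Dyer--Frieze. Everything else — independence of the $Y_i$, the deterministic prefactor, the unbiasedness $\mathbb{E}[\widehat{Z}_{\beta^\star}] = Z_{\beta^\star}$ — was already established in Section~\ref{sec:practice} and carries over verbatim.
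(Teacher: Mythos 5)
Your proposal is correct and takes essentially the same route as the paper: both verify the Dyer--Frieze hypothesis $\mathbb{E}[X_i^2]/(\mathbb{E}[X_i])^2 \leq e^2$ by using $0\leq E(\mathbf{c})\leq n$ and $\beta_i-\beta_{i-1}=1/n$ to force $X_i\in[e^{-1},1]$ (your $X_i^2\leq X_i$ refinement even gives $B=e$, which the paper does not exploit), and then apply Theorem \ref{thm:dyer} together with the unbiasedness $\mathbb{E}\bigl[Z_{\beta_0}\prod_{i=1}^\ell Y_i\bigr]=Z_{\beta^\star}$ and the deterministic prefactor $Z_{\beta_0}$. Your only caveat---that $E^{(d)}$ might exceed $n$---is unnecessary: as defined in Example \ref{eg:rll} it counts indices $j$ (not pairs $(j,i)$), so $E^{(d)}(\mathbf{x})\leq n-d\leq n$ automatically.
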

\begin{proof}
	Since we have that $0\leq E^{(d)}(\mathbf{c})\leq n$ and $0\leq E^{(\omega)}(\mathbf{c})\leq n$ for all $d,\omega,\text{ and }n$, it follows that the random variable $X_i \in [e^{-1},1]$ as $\beta_i-\beta_{i-1} = \frac{1}{n}$. Hence, $\mathbb{E}[X_i^2]/(\mathbb{E}[X_i])^2 \leq e^2=:B$. The proof then follows from a simple application of Theorem \ref{thm:dyer} with the observation that by the independence of the $Y_i$s, we have $\mathbb{E}[Z_{\beta_0}\times \prod_{i=1}^\ell Y_i] = Z_{\beta^\star}$, from \eqref{eq:estimate}.
\end{proof}
\begin{remark}
	Following Proposition 4.2 in \cite[Lecture 4]{sinclair}, we have that the constant on the right-hand side of Corollary \ref{cor:dyer} can be improved to $1-\gamma$ for $\gamma$ arbitrarily small, by using the new estimate $\overline{Z}_{\beta^\star}$ that is the median of $\widehat{Z}_{\beta^\star}^{(1)},\ldots,\widehat{Z}_{\beta^\star}^{(T)}$ where $T = O(\log \gamma^{-1})$ and each $\widehat{Z}_{\beta^\star}^{(i)}$, for $1\leq i\leq T$, is drawn i.i.d. according to \eqref{eq:estimate}. Hence, we obtain an estimate that lies in $[(1-\epsilon)Z_{\beta^\star},(1+\epsilon)Z_{\beta^\star}]$, for $\epsilon$ arbitrarily small, with arbitrarily high probability.
\end{remark}
Observe that the number of samples $t^\star$ required to compute a single sample average as in \eqref{eq:sampleav} is polynomial (in fact, linear) in the length $\ell$ of the cooling schedule, for a fixed $\epsilon>0$. It thus remains to specify this length $\ell$. From arguments similar to that in \cite[Lecture 4]{sinclair} (see the paragraph following Eq. (4.6) there), we have that for $\beta^\star = O(n^2)$, the value $Z_{\beta^\star}$ is such that $Z_{\beta^\star} = (1+\delta_n)Z$, for $\delta_n = \text{exp}(-\Theta(n^2))$. In other words, for large $n$, with $\beta^\star = \Theta(n^2)$, our estimate $\widehat{Z}_{\beta^\star}$ is such that
\[
\Pr[(1-\epsilon)(1+\delta_n)Z\leq \widehat{Z}_{\beta^\star}\leq (1+\epsilon)(1+\delta_n)Z]\geq \frac34,
\]
from Corollary \ref{cor:dyer}. Hence, it suffices for $\ell$ to be $\Theta(n^3)$ (since $\beta_i - \beta_{i-1} = 1/n$, for $1\leq i\leq \ell$, with $\beta_0 = 0$ and $\beta_\ell = \beta^\star$) to obtain a good estimate of the true number of constrained codewords, $Z$. We also then have that the total number of samples required, $t^\star\ell$, is $\Theta(n^6)$, for a fixed $\epsilon$, which is still only polynomial in the blocklength $n$ of $\mathcal{C}$. This must be contrasted with the number of computations required to obtain an exact value of $Z$ (by either directly going over the codewords of $\mathcal{C}$ or by using the technique in \cite{arnk23jsait}), which is at least $\min(2^k,2^{n-k})$ that is exponential in $n$ when $k$ grows linearly with $n$.
\section{Numerical Examples}
\label{sec:numerics}
In this section, we shall apply a variant of Algorithm \ref{alg:estimate} to compute estimates of the sizes of constrained subcodes of specific RM codes of small or moderate blocklengths. The method we use in this section differs from Algorithm \ref{alg:estimate} in that we do not pick a value of $\ell$ (which determines the cooling schedule completely) in advance. Instead, we shall iterate the loop in Step 4 of Algorithm \ref{alg:estimate} and keep updating the estimate $\widehat{Z}_{\beta^\star}$ until it settles to within a precribed precision $\delta\in (0,1)$. The algorithm we use in our implementations is shown as Algorithm \ref{alg:estimateprac}.

\begin{algorithm}[t]
	\caption{Estimating $Z$ via $\widehat{Z}$}
	\label{alg:estimateprac}
	\begin{algorithmic}[1]	
		\Procedure{Estimator}{}
		\State Fix a large $t\in \mathbb{N}$.
		\State Fix a (small) precision $\delta\in (0,1)$ and set $\beta \leftarrow 0$.
		\State Set \texttt{curr}$\ \leftarrow |\mathcal{C}|$ and \texttt{prev}$\ \leftarrow 0$.
		\While{$|\text{\texttt{curr}}-\text{\texttt{prev}}|>\delta$}
		\State Increment $\beta \leftarrow \beta+1/n$.
		\State Draw $t$ samples $\mathbf{c}_{1},\ldots,\mathbf{c}_{t}$ i.i.d. from $p_\beta$ using Algorithm \ref{alg:sample}.
		\State For $1\leq j\leq t$, set $X_{j} \leftarrow \text{exp}(-E(\mathbf{c}_{j})/n)$.
		\State Compute $Y = \frac{1}{t}\sum_{j=1}^t X_{j}$.
		\State Update \texttt{prev}$\ \leftarrow$ \texttt{curr} and \texttt{curr}$\ \leftarrow Y\cdot \text{\texttt{curr}}$.
		\EndWhile
		\State Output $\widehat{Z} = \text{\texttt{curr}}$.
		\EndProcedure	
	\end{algorithmic}
\end{algorithm} 

 As mentioned earlier, we shall illustrate examples for the $(d,\infty)$-RLL and constant-weight constraints. We shall also compare the estimates obtained by our method with the true values computed by either brute-force search or using other techniques in the literature.
\subsection{$(d,\infty)$-RLL Constraint}
In what follows, we shall work mainly with the settings where $d=1,2$. For the case when $d=1$, Table \ref{tab:1inftrue} catalogues the values of the estimate $\widehat{Z}$ obtained by running Algorithm \ref{alg:estimateprac}, for fixed values of the parameters $\tau$ (the number of Metropolis steps in Algorithm \ref{alg:sample}), $t$ (the number of samples drawn in every iteration in Algorithm \ref{alg:estimateprac}), and $\delta$ (the precision to which convergence of the estimate in Algorithm \ref{alg:estimateprac} is guaranteed), for different codes RM$(m,r)$. The estimates are compared with the true values $Z$ of the number of constrained codewords in RM$(m,r)$, which are computed by a brute-force search. Observe that the values of $m$ are such that the codes considered in this table are short/moderate-blocklength RM codes, for which such a brute-force enumeration is tractable\footnote{While it is in principle possible to use the Fourier-analytic techniques in \cite{arnk23jsait} to transform the counting problem for codes of rate larger than $1/2$ to a computation over codes of rate at most $1/2$, the procedure specified there still requires storing the Fourier coefficients of the indicator function of the constraint---a task that is prohibitively expensive, for even blocklengths as small as $32$ ($m=5$). We hence simply adopt the brute-force search strategy for high-rate codes, too.}. We also compare the ``rates'' of the estimates $\widehat{Z}$, given by $\frac{\log_2 \widehat{Z}}{2^m}$, with the true rates of the constrained subcodes, $\frac{\log_2 {Z}}{2^m}$. We note that the estimates $\widehat{Z}$ are close to the true values $Z$, in these cases, and closer still are the rates of the estimates to the true rates.

Next, we use our sampling-based method to compute estimates of the numbers of constrained codewords in moderate blocklength RM codes for which the brute-force enumeration strategy is computationally intractable. We then compare the rates of our estimates with results from \cite{arnk23tit}. In particular, suppose that $R=\text{rate}(\text{RM}(m,r))$. From the proofs of Theorems III.1 and III.2 in \cite{arnk23tit}, we infer that there exist $(1,\infty)$-RLL constrained subcodes of RM$(m,r)$ of rate $R^1$, where 
\[
R^1\geq
	\max\left(\frac{\log_2\binom{m-1}{\leq r-1}}{2^m},R-\frac38-\frac{1}{4\cdot 2^{m-1}}\right)=:R^1_\text{LB}.
\]
Our results are documented in Table \ref{tab:1infguess}. As can be seen from the table, the rates of our estimates are larger than the rates computed using the results of \cite{arnk23tit}, indicating that the rates analytically derived in \cite{arnk23tit} are not tight. 

Similarly, in Table \ref{tab:2inf}, comparisons are made between the estimated sizes of $(2,\infty)$-RLL constrained subcodes of RM$(m,r)$, for varying values of $m$ and $r$, with the true sizes. 

\begin{table}[t!]
	\centering
	\begin{tabular}{||c || c|| c||c||c|| c ||c|| c|| c||} 
		\hline
		$m$ & $r$ & $\tau$ & $t$ & $\delta$& $\widehat{Z}$ & $Z$ & $\frac{\log_2 \widehat{Z}}{2^m}$ & $\frac{\log_2 {Z}}{2^m}$\\
		\hline
		$4$& $2$& $5\times 10^3$& $50$& $0.1$& $80$ & $83$ & $0.3951$ & $0.3984$\\
		\hline
		$5$ & $2$ & $10^4$ & $50$ & $0.1$ & $278.446$ & $259$ & $0.2538$ & $0.2505$\\
		\hline
		$5$ & $3$ & $10^4$ & $10$ & $0.001$ & $126490$ & $89172$ & $0.5296$ & $0.5139$\\
		\hline
		$6$ & $1$ & $10^4$ & $10$ & $0.001$ & $5.551$ & $4$ & $0.0386$ & $0.0312$\\
		\hline
		$6$ & $2$ & $10^4$ & $10$ & $0.001$ & $997.7$ & $803$ & $0.1557$ & $0.1508$\\
		\hline
		$7$ & $2$ & $10^4$ & $5$ & $0.1$ & $3128.4$ & $2467$ & $0.0907$ & $0.0880$\\
		\hline
		$7$ & $2$ & $10^4$ & $100$ & $0.001$ & $2515.5$ & $2467$ & $0.0883$ & $0.0880$\\
		\hline
		$8$ & $1$ & $10^5$ & $10$ & $0.001$ & $5.3787$ & $4$ & $0.0095$ & $0.0078$\\
		\hline
	\end{tabular}
	\caption{Table of values of estimates $\widehat{Z}$ and rates $\frac{\log_2\widehat{Z}}{2^m}$ of the sizes of $(1,\infty)$-RLL constrained codewords in codes RM$(m,r)$, for different values $m\geq 1$, $r\leq m$, and for different parameters in Algorithms \ref{alg:sample} and \ref{alg:estimateprac}. The estimates are compared with the true sizes $Z$ and true rates $\frac{\log_2{Z}}{2^m}$ computed via brute-force search and enumeration.}
	\label{tab:1inftrue}
\end{table}

\begin{table}[t!]
	\centering
	\begin{tabular}{||c || c|| c||c||c|| c|| c|| c||} 
		\hline
		$m$ & $r$ & $\tau$ & $t$ & $\delta$& $\widehat{Z}$ & $\frac{\log_2\widehat{Z}}{2^m}$ & $R^1_\text{LB}$\\
		\hline
		$7$& $3$& $10^4$& $100$& $0.001$& $2.926\times 10^8$ & $0.1678$ & $0.1211$\\
		\hline
		$7$ & $4$ & $10^4$ & $100$ & $0.001$ & $1.199\times 10^{18}$ & $0.4692$ & $0.3945$\\
		\hline
		$7$ & $5$ & $10^4$ & $100$ & $0.001$ & $2.676\times 10^{24}$ & $0.6340$ & $0.5586$\\
		\hline
		$8$ & $2$ & $10^5$ & $10$ & $0.1$ & $1.255\times 10^4$ & $0.0526$ & $0.0312$\\
		\hline
		$8$ & $3$ & $10^5$ & $10$ & $0.1$ & $5.249\times 10^{10}$ & $0.1391$ & $0.1133$\\
		\hline
		$8$ & $4$ & $10^5$ & $10$ & $0.1$ & $5.754\times 10^{25}$ & $0.3343$ & $0.2598$\\
		\hline
		$8$ & $5$ & $10^5$ & $10$ & $0.1$ & $3.464\times 10^{42}$ & $0.5520$ & $0.4785$\\
		\hline
	\end{tabular}
	\caption{Table of values of estimates $\widehat{Z}$ of the numbers of $(1,\infty)$-RLL constrained codewords in moderate-blocklength codes RM$(m,r)$, for which a brute-force computation of constrained codewords is intractable.}
	\label{tab:1infguess}
\end{table}

\begin{table}[t!]
	\centering
	\begin{tabular}{||c || c|| c|| c||} 
		\hline
		$m$ & $r$ & $\widehat{Z}$ & $Z$\\
		\hline
		$4$& $1$&  $1.101$ & $1$\\
		\hline
		$4$ & $2$ &  $36.614$ & $37$\\
		\hline
		$4$ & $3$ &  $350.743$ & $303$\\
		\hline
		$5$ & $2$ &  $87.025$ & $81$\\
		\hline
		$5$ & $3$ &  $4998.2$ & $4917$\\
		\hline
		$5$ & $4$ &  $1.271\times 10^5$ & --\\
		\hline
		$6$ & $2$ &  $184.473$ & $177$\\
		\hline
		$6$ & $3$ &  $6.663\times 10^4$ & --\\
		\hline
		$7$ & $2$ &  $357.672$ & --\\
		\hline
	\end{tabular}
	\caption{Table of values of estimates $\widehat{Z}$ of the numbers of $(2,\infty)$-RLL constrained codewords in RM$(m,r)$, for varying values of $m\geq 1$ and $r\leq m$, compared with the true counts $Z$ whenever a brute-force computation is tractable; a `--' symbol is inserted in instances when it is not. Here, the parameters $\tau = 10^4$, $t=100$, and $\delta = 0.001$.}
	\label{tab:2inf}
\end{table}
\subsection{Constant-weight Constraint}
We now shift our attention to the constant-weight constraint, which allows us to get estimates of the weight enumerator $(A_{m,r}(\omega): 0\leq \omega\leq 2^m)$ of RM$(m,r)$. We use the following two facts (see, for example, the survey \cite{rm_survey}) about the weight distribution of RM codes to ease computation: (1) the weight distribution is supported only on even weights, i.e., $A_{m,r}(\omega) = 0$ if $\omega$ is odd, and (2) the weight distribution is symmetric about $\omega = 2^{m-1}$, i.e., $A_{m,r}(\omega) = A_{m,r}(n-\omega)$, for $0\leq \omega\leq 2^m$. Hence, in the sequel, we shall only compute estimates of the weight enumerators for even weights $\omega$, for $2^{m-r}\leq \omega\leq 2^{m-1}$, since the minimum distance of RM$(m,r)$ is $2^{m-r}$. Furthermore, when we compare our size or rate estimates with the true sizes $A_{m,r}(\omega)$ or rates $\frac{1}{2^m}\cdot {\log_2 A_{m,r}(\omega)}$, we shall confine our attention to only those weights $\omega$ such that $A_{m,r}(\omega)>0$. 

Figures \ref{fig:rmwt_n5r3}, \ref{fig:rmwt_n6r3}, and \ref{fig:rmwt_n7r3} show comparisons of the rates of our estimates of the weight enumerators, with the true rates. For these runs, we set $\tau = 10^4$, $t=100$, and $\delta = 0.001$. We mention that for RM$(5,3)$ and RM$(6,3)$, the true weight enumerators were obtained by brute-force computation or by using the algebraic techniques in Chapter 5 of \cite{sarwate}. For RM$(7,3)$, the true weight enumerators were obtained in \cite{sugino} by solving a system of linear equations that makes use of the MacWilliams identities, and we directly use these values. In all these cases, we observe that our estimates are close to the ground truth. 

We also use our method to estimate the weight enumerators (and their rates) of RM$(9,4)$, for selected weights. It is known (see Section 4 in \cite{sugita}) that it is possible to determine all the true weight enumerators of RM$(9,4)$ if the weight enumerators at weights $\omega = 80, 84$ are derived. To this end, we apply our method to compute estimates of the weight enumerators at these weights (see Table \ref{tab:rmwt94}), thereby allowing one to plug these estimates in the equations in \cite{sugita} obtain estimates of the entire weight distribution of RM$(9,4)$.
\begin{figure*}
	\centering
	\includegraphics[width=\textwidth]{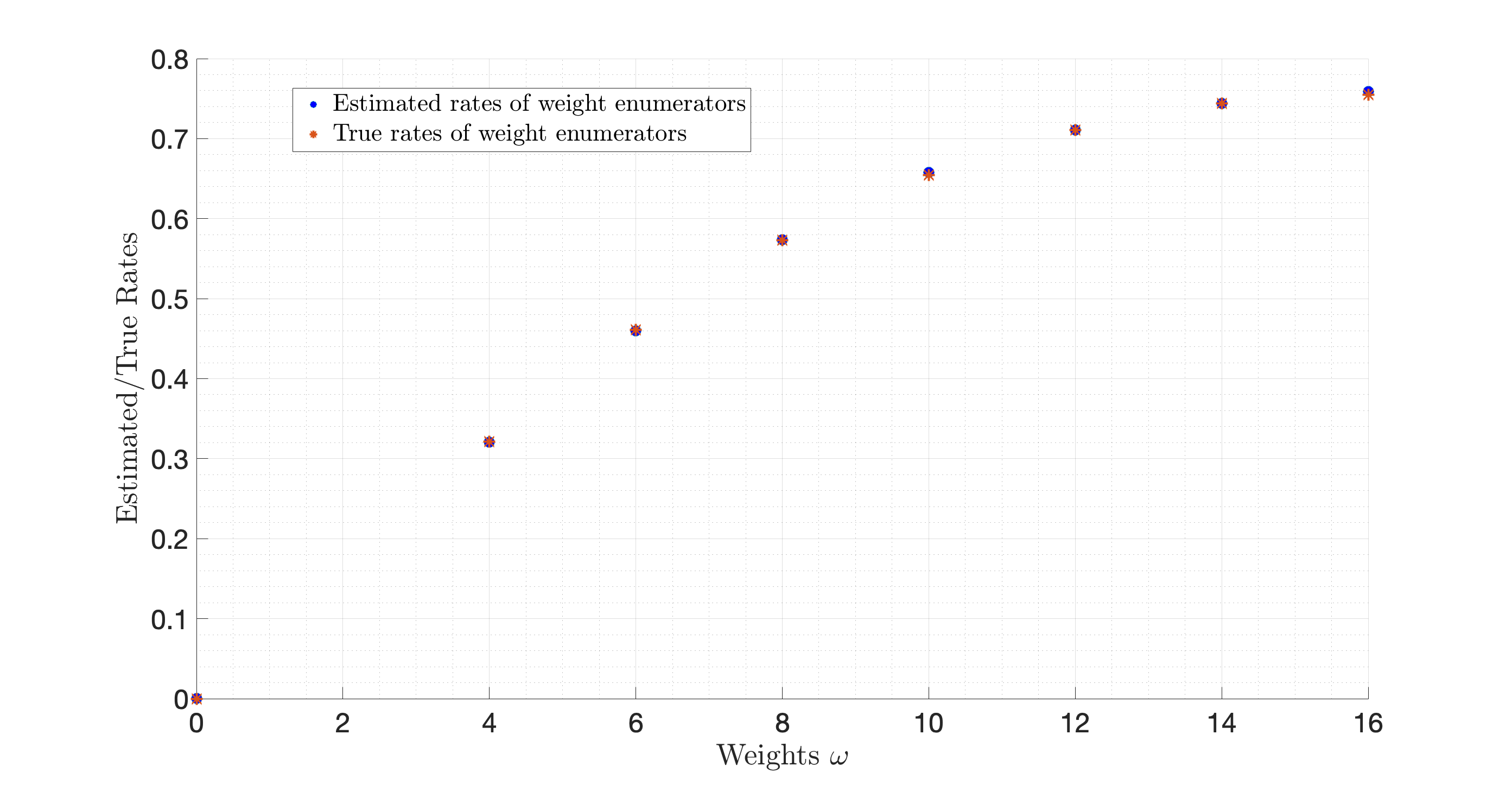}
	\caption{Plot comparing the estimates of rates of the weight enumerators obtained via our sampling-based approach, with the true rates (for weights with positive weight enumerators), for RM$(5,3)$.}
	\label{fig:rmwt_n5r3}
\end{figure*}
\begin{figure*}
	\centering
	\includegraphics[width=\textwidth]{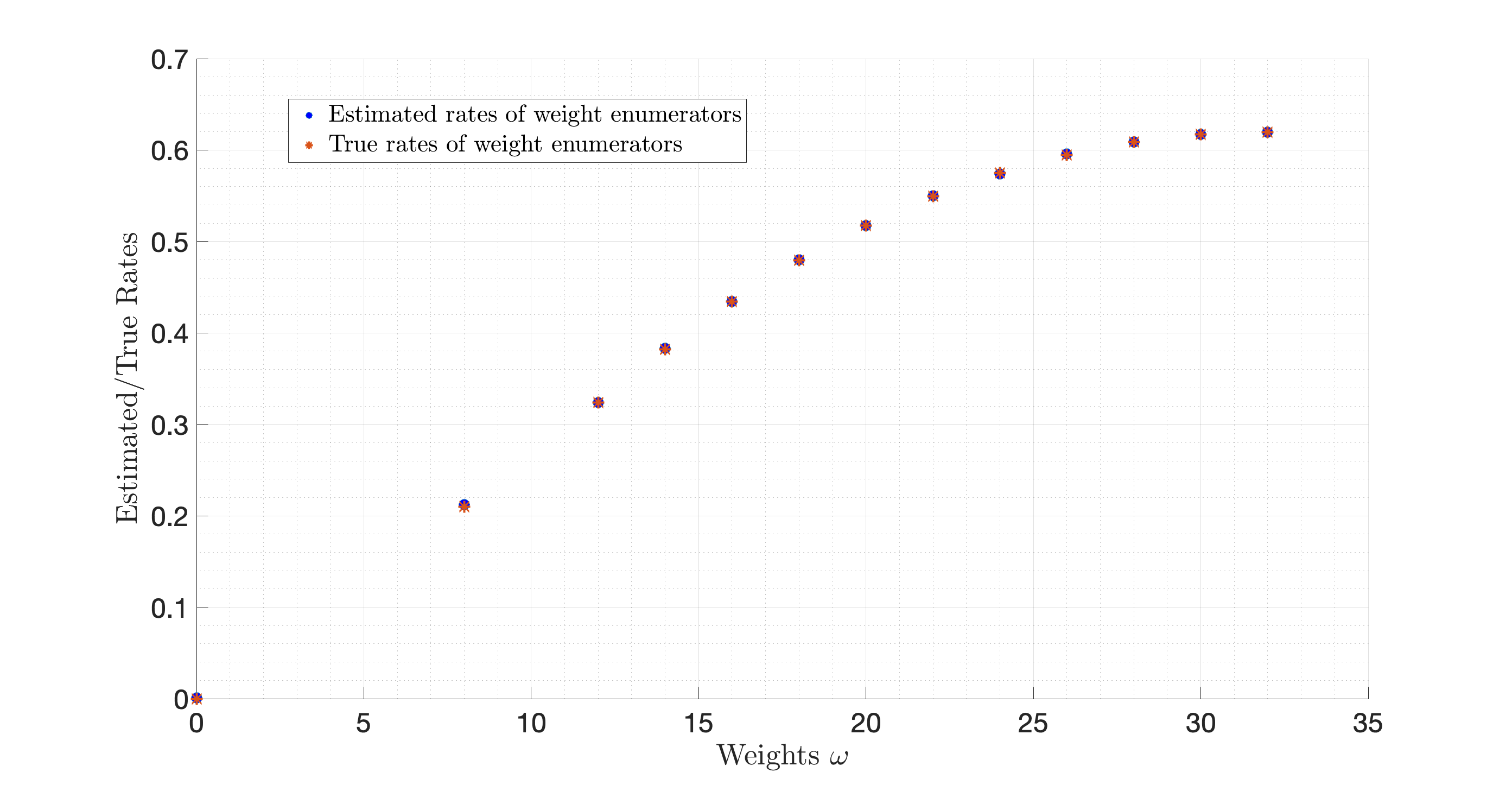}
	\caption{Plot comparing the estimates of rates of the weight enumerators obtained via our sampling-based approach with the true rates (for weights with positive weight enumerators), for RM$(6,3)$.}
	\label{fig:rmwt_n6r3}
\end{figure*}
\begin{figure*}
	\centering
	\includegraphics[width=\textwidth]{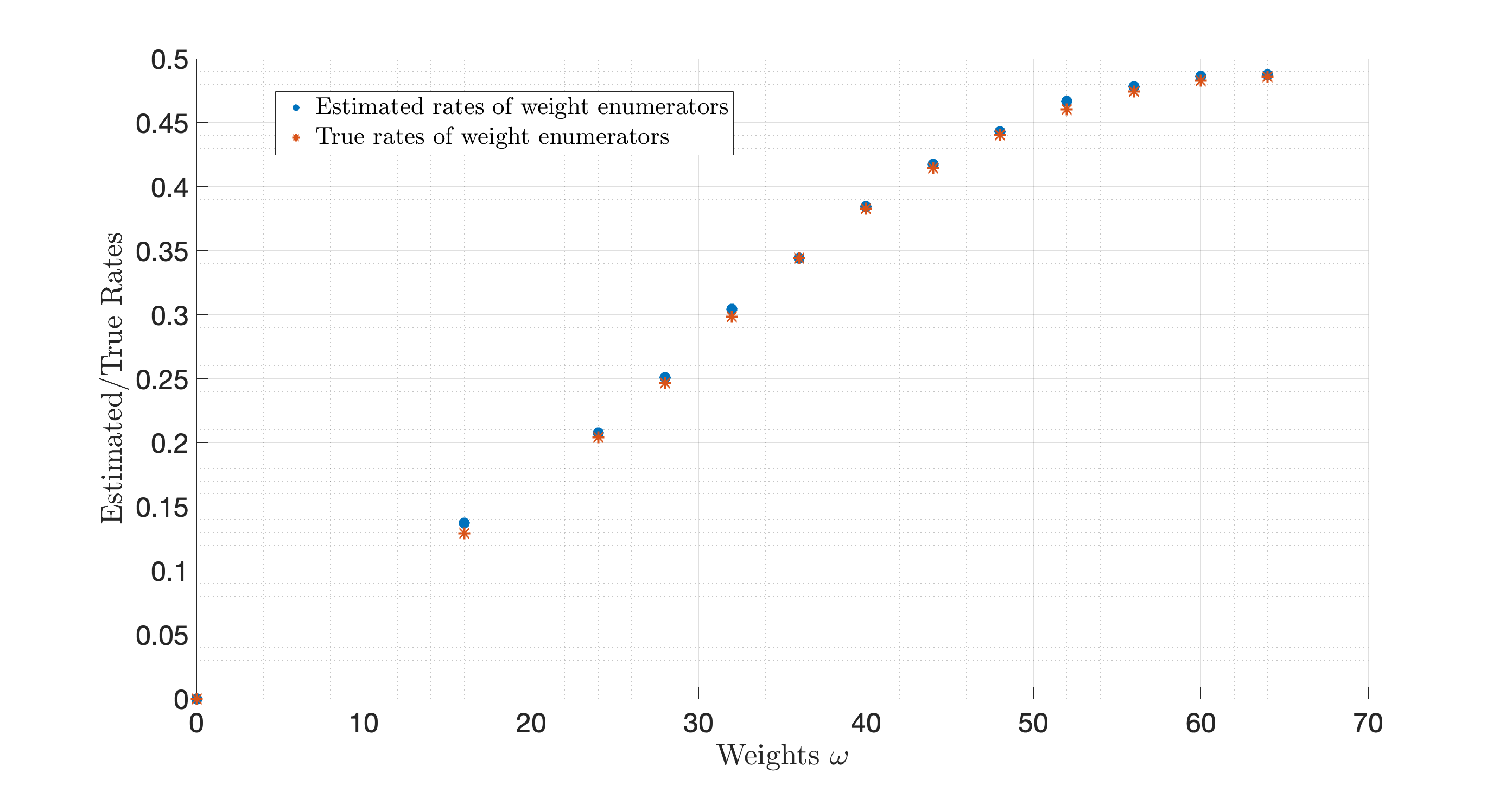}
	\caption{Plot comparing the estimates of rates of the weight enumerators obtained via our sampling-based approach with the true rates (for weights with positive weight enumerators), for RM$(7,3)$, obtained from \cite{sugino}.}
	\label{fig:rmwt_n7r3}
\end{figure*}

\begin{table}[t!]
	\centering
	\begin{tabular}{||c || c|| c||} 
		\hline
		$\omega$& $\widehat{Z}$ & $\frac{\log_2 \widehat{Z}}{2^m}$\\
		\hline
		$76$ & $4.079\times 10^{22}$ & $0.1467$\\
		\hline
		$80$ & $2.991\times 10^{23}$ & $0.1523$\\
		\hline
		$84$ & $1.429\times 10^{25}$ & $0.1632$\\
		\hline
	\end{tabular}
	\caption{Table of estimates $\widehat{Z}$ and $\frac{\log_2 \widehat{Z}}{2^m}$ of the weight enumerators and their rates, respectively, for RM$(9,4)$, for weights $\omega = 76, 80, 84$. Here, the parameters $\tau = 5\times 10^5$, $t=1$, and $\delta = 0.001$.}
	\label{tab:rmwt94}
\end{table}
\section{Conclusion}
In this paper, we proposed a novel sampling-based approach for computing estimates of the sizes or rates of $(d,\infty)$-runlength limited (RLL) and constant-weight constrained subcodes of the Reed-Muller (RM) family of codes. In particular, via our estimates of the sizes of constant-weight subcodes, we obtained estimates of the weight distribution of moderate-blocklength RM codes. We observed that our estimates are close to the true sizes (or rates) for those RM codes where a direct computation of the true values is computationally feasible. Moreover, using our techniques, we obtained estimates of the weight distribution of RM$(9,4)$, whose true weight enumerators are not known for all weights. We also provided theoretical guarantees of the robustness of our estimates and argued that for a fixed error in approximation, our proposed algorithm uses a number of samples that is only polynomial in the blocklength of the code. Such sampling-based approaches have been largely unexplored for counting problems of interest in coding theory and we believe that there is much scope for the application of such techniques to other open problems.


%



\ifCLASSOPTIONcaptionsoff
  \newpage
\fi



%
%
\bibliographystyle{IEEEtran}
{\footnotesize
	\bibliography{references.bib}}

%




\end{document}